\let\llncssubparagraph\subparagraph
\let\subparagraph\paragraph
\let\subparagraph\llncssubparagraph
\newcommand{\ra}[1]{\renewcommand{\arraystretch}{#1}}
\newcommand{\greentick}{\textcolor{black}{\ding{51}}} 
\newcommand{\reddash}{\textcolor{gray}{\textemdash}}
\tiny\color{gray},     % Line numbers in gray
\newcommand{\WD}{\mathtt{WD}}
\newcommand{\AWD}{\mathtt{AWD}}
\title{Augmented Weak Distance for Fast and Accurate Bounds Checking}
\titlerunning{Augmented Weak Distance}
\author{Zhoulai Fu\inst{1,3} \ \ \ \  Freek Verbeek\inst{2,3} \ \ \ \ Binoy Ravindran\inst{3}}
\authorrunning{Z. Fu, F. Verbeek, and B. Ravindran}
\institute{
State University of New York, Korea \and
Open Universiteit of the Netherlands, Netherlands \and
Virginia Tech, USA
}
\setlist{topsep=2pt, itemsep=2pt, partopsep=1pt}
\titlespacing*{\section}{0pt}{2ex}{1ex}
\titlespacing*{\subsection}{0pt}{1.5ex}{1ex}
\titlespacing*{\subsubsection}{0pt}{1ex}{0.5ex}
\begin{document}
% Table of Contents (Optional for drafting, remove in the final version)
% \tableofcontents
% \newpage

\maketitle

% Abstract

 \begin{abstract}

This work advances floating-point program verification by introducing Augmented Weak-Distance (AWD), a principled extension of the Weak-Distance (WD) framework. WD is a recent approach that reformulates program analysis as a numerical minimization problem, providing correctness guarantees through non-negativity and zero-target correspondence. It consistently outperforms traditional floating-point analysis, often achieving speedups of several orders of magnitude. However, WD suffers from ill-conditioned optimization landscapes and branching discontinuities, which significantly hinder its practical effectiveness. AWD overcomes these limitations with two key contributions. First, it enforces the \emph{Monotonic Convergence Condition} (MCC), ensuring a strictly decreasing objective function and mitigating misleading optimization stalls. Second, it extends WD with a per-path analysis scheme, preserving the correctness guarantees of weak-distance theory while integrating execution paths into the optimization process. These enhancements construct a well-conditioned optimization landscape, enabling AWD to handle floating-point programs effectively, even in the presence of loops and external functions. We evaluate AWD on SV-COMP 2024, a widely used benchmark for software verification.Across 40 benchmarks initially selected for bounds checking, AWD achieves 100\% accuracy, matching the state-of-the-art bounded model checker CBMC, one of the most widely used verification tools, while running 170X faster on average. In contrast, the static analysis tool Astrée, despite being fast, solves only 17.5\% of the benchmarks. These results establish AWD as a highly efficient alternative to CBMC for bounds checking, delivering precise floating-point verification without compromising correctness.

\end{abstract}

\footnotesize   
\begin{displayquote} 
\textit{There is absolutely no doubt that every effect in the universe can be explained satisfactorily from final causes, by the aid of the method of maxima and minima.} 
    
    \begin{flushright}
    - Leonhard Euler, 1744~\cite{euler1952methodus}.
    \end{flushright}
\end{displayquote}
\normalsize

\section{Introduction}\label{sect:intro}

Handling floating-point semantics remains a fundamental challenge in programming languages. Floating-point programs are difficult to reason about (e.g., $0.1 + 0.2=0.30000000000000004$ in modern hardware) and tricky to implement correctly. A straightforward computation like $\sqrt{x + 1} - 1$ in C can produce a 12\% relative error for small $x$ due to floating-point cancellation~\cite{DBLP:conf/oopsla/FuBS15}.

The \emph{weak-distance approach} transforms floating-point program analysis into a mathematical optimization problem~\cite{DBLP:conf/pldi/FuS19}. It encodes program semantics using a generalized metric, called \emph{weak distance}, which precisely captures a given analysis objective. Minimizing this function is theoretically guaranteed to solve the problem, providing a systematic framework for numerical program analysis. Unlike heuristic methods that minimize ad-hoc objective functions without guarantees, weak distance ensures correctness by guaranteeing that the minimizer corresponds to a valid solution.

To illustrate the weak-distance approach, we present a simple example. 

\subsubsection{The Weak-Distance Approach}
Let \(\langle \mathit{Prog}, I \rangle\) be a floating-point analysis problem, where \(I \subset \text{dom}(\mathit{Prog})\) represents the set of inputs that trigger a specific program behavior. A function \(W: \text{dom}(\mathit{Prog}) \to \mathbb{F}\) is a \emph{weak distance} for \(\langle \mathit{Prog}, I \rangle\) if:

\begin{enumerate}
    \item[] \textbf{Non-Negativity} \(W(x) \geq 0\) for all \(x\).
    \item[] \textbf{Zero-Target Correspondence} \(W(x) = 0\) if and only if \(x \in I\).
\end{enumerate}

These properties reduce the analysis problem to minimizing \(W\). If the minimum is \(0\), the corresponding input belongs to \(I\). Otherwise, \(I\) is empty.

Consider the following example:

\begin{lstlisting}[language=C, caption={Example: Handling Loops Without Path Explosion}, label=list:check_sum]
void check_sum(double x) {
    int integral = (int)x;          
    double decimal = x - integral; // fractional part
    
    double sum = 0;
    for (int i = 1; i <= integral; i++) {
        sum += i;
    }
    if (sum + decimal == 11) printf("Unexpected");
}
\end{lstlisting}

This function checks whether an input \texttt{x} satisfies \texttt{sum + decimal == 11}, where \texttt{sum} is the sum of all integers from 1 to the integral part of \texttt{x}, and \texttt{decimal} is the fractional part.

At first glance, \texttt{"unexpected"} seems impossible to trigger: \(\sum_{i=1}^4 = 10\), \(\sum_{i=1}^5 = 15\), and adding any fractional part  cannot yield 11. However, an input \(x\) close to 5, such as \(4.s\) where \(0.s\) approaches 1, can satisfy the condition.

Traditional symbolic execution can find this input but often suffers from \emph{path explosion} and the complexity of floating-point constraint solving. The weak-distance approach offers an efficient alternative by embedding a numerical deviation metric into the program. Specifically, it introduces a global variable \texttt{d} that captures the absolute difference between the left-hand and right-hand sides of each branch condition. In this case:
\begin{lstlisting}
d = fabs(sum + decimal - 11);
\end{lstlisting}
 This transformation defines a function \(\mathit{W}\) that satisfies E1 and E2, enabling us to solve the problem via numerical minimization.

Despite discontinuities in the weak-distance landscape, modern optimization techniques efficiently identify this minimum. By using \emph{Basinhopping}~\cite{iwamatsu2004basin}, which combines MCMC-inspired stochastic sampling with local minimization, we uncover an unexpected solution: 
\begin{align}
    \texttt{x = 4.999999999999999}, 
\end{align}
which is distinct from 5 (no floating-point representation issue). This value produces \texttt{sum = 10} and \texttt{decimal = 0.999999999999999}, satisfying \texttt{sum + decimal == 11} and triggering \texttt{"unexpected"}.

\begin{figure}[ht!]
    \centering
    % Subfigure (a): lstlisting
    \begin{subfigure}[b]{0.47\textwidth}
        \begin{lstlisting}
(*@\highlightgray{double d;}@*)
double W(double x){
  ... for-loop 
  (*@\highlightgray{d = fabs(sum + decimal - 11);}@*)
  if (sum + decimal == 11) {   
     printf("Unexpected");
}
        \end{lstlisting}
        \caption{}%(a) Weak-distance function.}
    \end{subfigure}
    \hfill
    % Subfigure (b): Image
    \begin{subfigure}[b]{0.47\textwidth}
        \includegraphics[width=\textwidth]{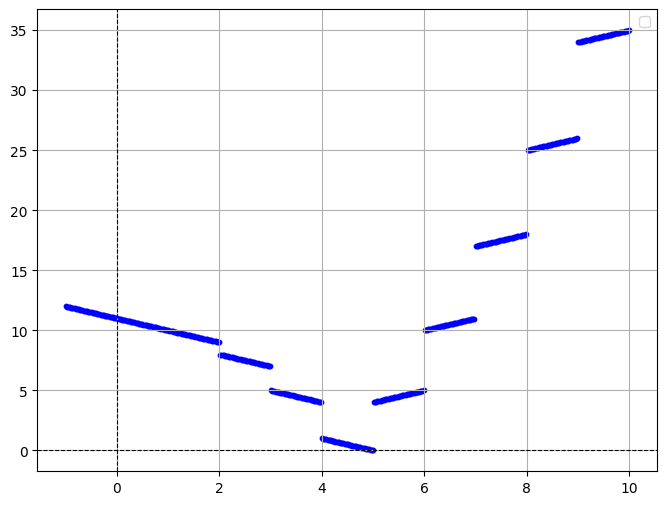}
        \caption{}
    \end{subfigure}
    % Overall caption
    \caption{(a) A weak-distance for \texttt{check\_sum}, and (b) its visualization}
    \label{fig:weak_distance_check_access}
\end{figure}

This weak-distance approach avoids direct reasoning about floating-point semantics, allowing it to handle loops and external functions seamlessly. It has been successfully applied to satisfiability solving~\cite{DBLP:conf/cav/FuS16}, automated testing~\cite{DBLP:conf/pldi/FuS17}, and boundary value analysis~\cite{DBLP:conf/pldi/FuS19}, often achieving an order-of-magnitude performance improvement over existing techniques. 

\subsubsection{Augmented Weak-Distance}
Despite its advantages, weak distance suffers from two key limitations: (1) its constraints of non-negativity and zero-target correspondence are \emph{too weak}, leading to \emph{flat landscapes} (vanishing gradients) or \emph{misleading objectives}; (2) it struggles with \emph{discontinuities from branching}, which hinder optimization.

We address these issues with \emph{augmented weak-distance}, contributing both theoretically and practically. Theoretically, we introduce an additional constraint, \(\text{MCC}\), formalized via \emph{path-input affinity}, which quantifies input closeness to an execution path by considering: (i) {execution depth}, tracking how far execution follows the expected path, and (ii) {branch satisfaction}, measuring how well branch conditions align with expected behavior. Embedding \emph{path-input affinity} ensures a structured optimization landscape, overcoming flat or discontinuous regions.
Practically, we develop a \emph{systematic algorithm} for constructing augmented weak-distance functions. This algorithm dynamically traces execution paths and computes path affinity at runtime with minimal overhead, ensuring efficient floating-point analysis. As an alternative to symbolic execution, it avoids SMT solving and loop unwinding, enabling robust floating-point verification even in the presence of loops and external calls.

\subsubsection{Empirical Results}
We evaluated our method on the \emph{SV-COMP 2024 benchmark suite}~\cite{svcomp2024}. Across 40 benchmarks with known ground truths:
\begin{itemize}
    \item Our approach \textbf{achieved 100\% accuracy}, matching the state-of-the-art bounded model checker CBMC~\cite{DBLP:journals/corr/abs-2302-02384}.
    \item Our method was \textbf{170 times faster} than CBMC.
    \item The static analysis tool Astrée~\cite{DBLP:conf/asian/CousotCFMMMR06} was \textbf{faster} but solved only \textbf{17.5\%} of the benchmarks.
\end{itemize}

\section{Weak Distance: Pitfalls and How We Address Them}\label{sect:eg}

As introduced in Sect.~\ref{sect:intro}, the weak-distance approach embeds a global variable \texttt{w = ...} before each branch, capturing deviations from satisfying the branch condition. The weak distance function satisfies \emph{non-negativity} and \emph{zero-target correspondence}, theoretically ensuring that minimizing \texttt{d} leads to a solution.  

However, theoretical correctness does not guarantee practical feasibility. Weak distance may mislead optimization due to abrupt changes in the landscape or discontinuities in execution paths, leading to premature convergence or missing valid solutions. We identify two key pitfalls:  
(1) \emph{Non-Monotonic Descent Disrupts Optimization}—Abrupt jumps in the landscape mislead the optimizer, causing premature convergence.  
(2) \emph{Ignoring Per-Path Behavior Causes Incorrect Solutions}—Weak distance assumes a globally smooth landscape, but execution paths may behave discontinuously.

The following sections illustrate these pitfalls and outline key principles for addressing them.

\subsection{Example 1: Non-Monotonic Descent Disrupts Optimization} \label{sect:mcc}

Consider the following example:

\begin{lstlisting}[language=C, caption={Sample function with nested if statements.}]
void check_date(int day, int month) {
    if (day == 20) {
        if (month == 10) 
            printf("reached");
    }
}
\end{lstlisting}

\subsubsection{How Weak Distance is Normally Applied} 

To construct a weak-distance function, we introduce \texttt{w} as a global variable to capture how far the input deviates from satisfying key branch conditions. At each conditional check, \texttt{w} encodes the squared difference between the input variable and the expected value. This ensures non-negativity and zero-target correspondence but does not guarantee an informative gradient for optimization. Fig.~\ref{fig:w3_vs_aw} illustrates this weak distance construction.

\begin{figure}[ht!]
    \centering
    \begin{minipage}{0.45\textwidth}
        \begin{lstlisting}[language=C]
double WD(double d, double m){
  int day = floor(d);
  int month = floor(m);
  (*@\highlightgray{d = (day - 20) ** 2;}@*) 
  if (day == 20){ 
    (*@\highlightgray{d = (month - 10) ** 2;}@*) 
    if (month == 10) 
      printf("reached");
}
        \end{lstlisting}
    \end{minipage}
    \hfill
    \begin{minipage}{0.45\textwidth}
        \begin{lstlisting}[language=C]
double AWD(double d, double m){
  int day = floor(d);
  int month = floor(m);
  (*@\highlightgray{d = (day - 20) ** 2 + 150;}@*) 
  if (day == 20){ 
    (*@\highlightgray{d = (month - 10) ** 2;}@*) 
    if (month == 10) 
      printf("reached");
}
        \end{lstlisting}
    \end{minipage}
    \caption{C Weak-Distance (L) and Augmented Weak-Distance (R) for \texttt{check\_sum}.}
    \label{fig:w3_vs_aw}
\end{figure}

\subsubsection{Why Weak Distance Fails}
Although \(\WD\) satisfies non-negativity and zero-target correspondence, it fails to guide optimization toward the correct target \( (20,10) \). The failure can be observed in how the minimization process unfolds:  
(1) The optimizer minimizes \(d\), reducing the squared distance from a starting point $(1,1)$.  
(2) It successfully moves along the \(x\)-axis, reaching \((19,1)\) where \(d = 1\).  
(3) However, when \(x = 20\), \(d\) jumps to 81, increasing sharply.  
(4) The optimizer misinterprets this jump as an incorrect direction, causing it to halt prematurely at $(19,1)$.

Thus, the weak distance fails because closeness does not consistently correspond to a lower objective function value. A strictly decreasing objective function as the input approaches the solution is needed. 

 % ===============aaaaaaa

% \begin{table}[h!]
%     \caption{Weak Distance Trends by Day and Month}
%     \centering
%     \setlength{\tabcolsep}{12pt}
%     \begin{tabular}{cccccc}
%         \toprule
%         \textbf{d (day)} & \textbf{m (month)} & \textbf{WD} & \textbf{Trend} & \textbf{AWD} & \textbf{Trend} \\
%         \midrule
%         1  & 1  & 361.0 &  & 501.0 &  \\
%         2  & 1  & 324.0 & \trenddown & 452.0 & \trenddown \\
%         3  & 1  & 289.0 & \trenddown & 405.0 & \trenddown \\
%         \vdots & \vdots & \vdots & \vdots & \vdots & \vdots \\
%         18 & 1  & 4.0   & \trenddown & 154.0 & \trenddown \\
%         19 & 1  & 1.0   & \trenddown & 151.0 & \trenddown \\
%         20 & 1  & 81.0  & \trendup ({\bf unexpected}) & 81.0 & \trenddown \\
%         \bottomrule
%     \end{tabular}
%     \label{tab:w3_trend}
% \end{table}

\subsubsection{Fixing the Issue: Monotonic Convergence Condition (MCC)}

To resolve this, we introduce the \emph{Monotonic Convergence Condition (MCC)}, ensuring a strictly decreasing objective function as the input approaches the solution: 

\begin{quote}
    \emph{MCC:} The objective function must decrease monotonically as the solution is approached.
\end{quote}
\vspace*{-3ex}

MCC accounts for two key aspects of closeness: (1) \emph{Execution depth}—the objective should decrease as the realized execution path nears the target in the syntax tree, and (2) \emph{Branch satisfaction}—the distance between the left- and right-hand sides of a branch, as already considered in weak distance. We formalize this in Sect.~\ref{sect:method} as \emph{path-input affinity}.

Fig.~\ref{fig:w3_vs_aw} modifies \(\WD\) to construct an \emph{Augmented Weak-Distance} function \(\AWD\) by introducing a large offset (e.g., 150) before the first branch:

\begin{equation}
    \text{d} = ( \text{day} - 20 )^2 + 150.
\end{equation}

This adjustment ensures that before satisfying the first branch (\(\text{day} = 20\)), the function remains guided by \((\text{day} - 20 )^2\), while the added 150 has no effect. Once the first branch is met, the function transitions to the second branch, where it is now governed by \((\text{month} - 10)^2\), a strictly smaller value than 150. This naturally guides the optimization deeper into the program, leading to \( (\text{day} = 20, \text{month} = 10) \).

Fig.~\ref{fig:aw3_w3} illustrates how \(\AWD\) corrects \(\WD\)'s misleading behavior by reshaping the optimization landscape. Unlike \(\WD\), which introduces structural discontinuities that hinder convergence, \(\AWD\) ensures a smoother descent, enabling the optimizer to reach the desired solution efficiently.

\begin{figure}[t]
    \centering
    \begin{subfigure}{0.48\textwidth}
        \centering
Have a few        \includegraphics[width=\linewidth]{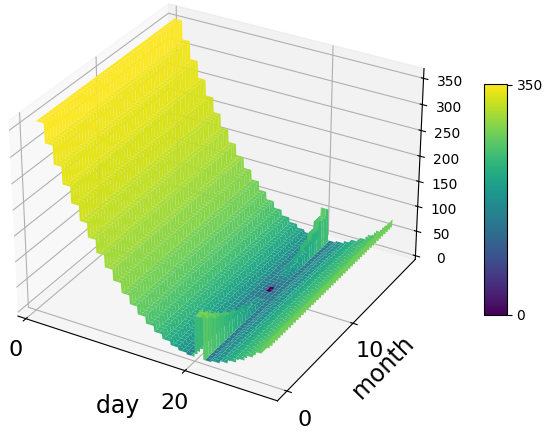}
    \end{subfigure}
    \hfill
    \begin{subfigure}{0.48\textwidth}
        \centering
        \includegraphics[width=\linewidth]{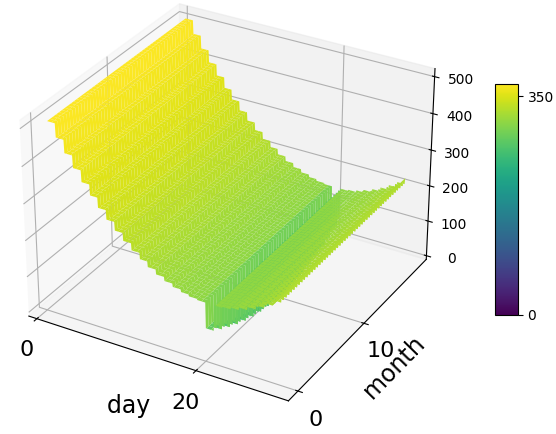}
    \end{subfigure}
    \caption{Weak distance (L) and Augmented Weak Distance (R) for \texttt{check\_sum}. }
    \label{fig:aw3_w3}
\end{figure}

\subsection{Example 2: Ignoring Per-Path Behavior Misses Solutions}\label{sect:per_path}

Numerical programs often use the \texttt{safe\allowbreak\_reciprocal} function:
\[
\text{safe\_reciprocal}(x) =
\begin{cases}
  1/x, & x \neq 0 \\
  0, & x = 0
\end{cases}
\]
to prevent division-by-zero, avoiding numerical instability and \texttt{NaN} propagation~\cite{bengio2012practical}. Consider the following program, which computes the cotangent of \( x \) using \texttt{safe\_reciprocal}, and checks whether the result is zero:
\begin{lstlisting}[language=C, caption={Program using \texttt{safe\_reciprocal} to compute \(\cot(x)\).},label={lst:cotangent}]
double cot(double x) {
   double y = safe_reciprocal(tan(x)); 
   if (y == 0) printf("reach 0"); 
   return y; }
\end{lstlisting}

\subsubsection{How Weak Distance is Normally Applied} 
The weak-distance method attempts to locate inputs that trigger \texttt{"reach 0"}. A standard weak-distance function embeds \texttt{d = (y - 0) ** 2;} before the conditional statement, creating a function that maps the input \(x\) to the global variable \texttt{d}.

\begin{lstlisting}
double WD(double x) {
    double y = safe_reciprocal(tan(x));
    (*@\highlightgray{d = (y - 0) ** 2;}@*) // Weak distance metric
    if (y == 0) printf("reach 0");
    return y;}
\end{lstlisting}

\subsubsection{Why Weak Distance Fails}
The weak-distance function creates a misleading optimization landscape, steering the optimizer away from \( x = 0 \). The failure arises because weak distance \emph{lacks smoothness across its input range}. Specifically, as \( x \) approaches \( \pi/2 \), \( y \) approaches 0, and the weak-distance function provides a smooth gradient, guiding the optimizer toward this solution. However, at \( x = 0 \), weak distance increases, incorrectly signaling the optimizer to move away, despite it being a valid solution.

\subsubsection{Fixing the Issue: Per-Path Optimization}
To overcome this limitation, we introduce a \emph{per-path optimization approach}, where each execution path is treated independently to avoid misleading discontinuities. Specifically, we separate execution paths: one solving for \( x = 0 \), where \(\tan(x) = 0\), and another solving for \( x = \pi/2 \), where \(\tan(x) \to \infty\). By treating each path separately, we \emph{restore smoothness within each optimization step}, ensuring that numerical minimization correctly identifies all solutions. A potential downside of this approach is the need for explicit path enumeration. However, weak distance remains advantageous in handling loops and external function calls, as it naturally constructs numerical objectives that generalize across execution paths.

\subsubsection{Takeaways}
These examples highlight a fundamental limitation of weak distance: its properties of \emph{non-negativity and zero-target correspondence alone are insufficient} when structural or numerical discontinuities exist in the optimization landscape. Weak distance fails because it treats all branches equally, ignoring both execution depth and path-dependent behavior. To overcome this, optimization must align with program structure and explicitly account for execution paths. The \emph{Monotonic Convergence Condition} (MCC) ensures a strictly decreasing objective function, preventing misleading optima, while the \emph{per-path optimization strategy} restores continuity and eliminates weak-distance misguidance. Together, these enhancements enable robust and accurate program analysis, even in the presence of branching, loops, and floating-point instability.

% % Methodology Section

 \section{Foundations of Augmented Weak Distance} \label{sect:method}

\subsubsection{General Notation} 

We denote the set of non-negative integers by \(\mathbb{Z}_{\geq 0}\), representing all integers greater than or equal to zero. The set of finite floating-point numbers in IEEE 754 (excluding NaN and infinity)~\cite{cowlishaw2008standard} is denoted by \(\mathbb{F}\). Given a function \(f\), its input domain is written as \(\mathit{dom}(f)\). The Cartesian product of two sets \(A\) and \(B\) is denoted as \(A \times B\). The notation \(\lambda x. f(x)\) defines an \emph{anonymous function}, representing a function that takes an input \(x\) and returns \(f(x)\). 

We use \(\mathit{Prog}\) to denote a program, \(\mathit{Br}\) for a branch, \(\pi\) for a path, and \(x\) for an input. A path is a sequence of branches, where each branch consists of a label and a Boolean outcome (true or false). . A path is \emph{partial} if it forms a prefix of a complete execution path.  Each input \(x\) is treated as an \(n\)-dimensional floating-point vector, meaning that \(\mathit{dom}(\mathit{Prog}) \subseteq \mathbb{F}^n\) for some integer \(n\). 

The notation \(\mathit{Prog}(x)\) represents the execution path taken by \(\mathit{Prog}\) when executed with input \(x\), which is a sequence of encountered branches.

\begin{definition}[Bounds Checking]
Bounds checking is a reachability problem that determines whether a numerical bound, specified by an arithmetic comparison (\(<, \leq, >, \geq, =\)) between floating-point numbers, can be reached. Given a target program \(\mathit{Prog}\) and a branch \(\mathit{Br}\), the goal is to find an input \(x\) such that the execution path \(\mathit{Prog}(x)\) passes through \(\mathit{Br}\).
\end{definition}

\subsection{Overall Solution}

We solve the bounds checking problem using a per-path  scheme to maximize precision. Specifically, for each execution path \(\pi\) passing through \(\mathit{Br}\), we seek an input that triggers \(\pi\). The per-path strategy mitigates the risk of missing solutions (Sect.~\ref{sect:per_path}), but a key challenge arises in handling loops. However, as illustrated in List.~\ref{list:check_sum}, weak distance  mitigates this issue by disregarding loops.  Therefore, we start our solution with the following  path synthesis  step.

\begin{center}
\begin{tcolorbox}[width=0.95\textwidth, colframe=gray!60, colback=gray!5, boxrule=0.5pt]
\paragraph{Step 1: Path Synthesis} 
We collect all execution paths terminating at \(\mathit{Br}\), excluding those containing loops or external function calls.
\end{tcolorbox}
\end{center}

Let \(\mathit{Paths}\) denote the set of synthesized paths from Step 1. To systematically evaluate each path, we construct an augmented weak distance function \(\AWD\), extending weak distance to a per-path formulation:
\[
\AWD: \mathit{Paths} \times \mathit{dom}(\mathit{Prog}) \to \mathbb{F}.
\]
The function \(\AWD(\pi, x)\) takes a path \(\pi \in \mathit{Paths}\) and an input \(x \in \mathit{dom}(\mathit{Prog})\), returning a floating-point value that quantifies how close \(x\) is to triggering \(\pi\). We impose the following conditions to ensure its correctness:

\begin{itemize}
    \item[] {\bf Non-negativity}: For all \(\pi \in \mathit{Paths}\) and \(x \in \mathit{dom}(\mathit{Prog})\), \(\AWD(\pi, x) \geq 0\).
    \item[] {\bf Zero-target correspondence}: \(\AWD(\pi, x) = 0\) if and only if \(x\) triggers \(\pi\). 
\end{itemize}

To address non-monotonic descent (Sect. \ref{sect:mcc}), we impose:

\begin{itemize}
    \item[] {\bf MCC}: For each path \(\pi \in \mathit{Paths}\), the function \(\lambda x. \AWD(\pi, x)\) must decrease as \(x\) moves closer to the target branch.
\end{itemize}

We will formalize  "closeness" in the next section. Below is Step 2.

\begin{center}
\begin{tcolorbox}[width=0.95\textwidth, colframe=gray!60, colback=gray!5, boxrule=0.5pt]
\paragraph{Step 2: AWD construction} Construct an $\AWD$ function satisfying non-negativity, zero-target correspondence, and MCC conditions.
\end{tcolorbox}
\end{center}

Once the \(\AWD\) function is constructed, we proceed to minimize it to solve the  bounds checking problem. Non-negativity and zero-target correspondence ensure that if \(\min_{x} \AWD(\pi, x) = 0\) for any \(\pi\), then the corresponding minimizer (\(\arg\min_{x} \AWD(\pi, x)\)) represents an input that triggers \(\pi\). The MCC condition guarantees that this optimization process is feasible. Thus, we conclude with:

\begin{center}
\begin{tcolorbox}[width=0.95\textwidth, colframe=gray!60, colback=gray!5, boxrule=0.5pt]
\paragraph{Step 3:} Minimize \(\lambda x . \AWD(\pi, x)\) for each \(\pi\). Report inputs $x$ where \(\AWD(\pi, x) = 0\).
\end{tcolorbox}
\end{center}

\subsection{Path-Input Affinity: Formalizing MCC}

Let \(\pi\) be a partial execution path terminating at \(\mathit{Br}\), and let \(\mathit{Prog}\{x\}\) denote the path taken by executing \(\mathit{Prog}\) with input \(x\). To formally define the notion of \emph{closeness} between \(\pi\) and \(x\) required for the Monotonic Convergence Condition (MCC), we introduce a metric called \emph{path-input affinity}. As motivated in Sect.~\ref{sect:mcc}, this metric captures two key aspects: (1) how deeply the realized path aligns with the expected path, and (2) how close the branch condition at the fork (where \(\mathit{Prog}\{x\}\) and \(\pi\) diverge) is to being satisfied.

\subsubsection{Preliminary Definitions}

We first introduce the necessary building blocks to define path-input affinity rigorously.

The \emph{Floating-Point Cardinality} between two floating-point numbers \(a, b \in \mathbb{F}\) is defined as the number of distinct floating-point values \(x \in \mathbb{F}\) such that \(\min(a, b) \leq x < \max(a, b)\). We denote this count by \(\kappa(a, b)\). Clearly, \(0 \leq \kappa(a, b) \leq 2^{64}\) for double-precision floating-point numbers.

We define the \emph{lexicographic order} \(\prec\) over \(\mathbb{Z}_{\geq 0} \times \mathbb{Z}_{\geq 0}\) as:
\[
(x', y') \prec (x, y) \quad \iff \quad x' < x \quad \text{or} \quad (x' = x \text{ and } y' < y).
\]
For example, \((3, 27) \prec (10, 1)\) and \((3, 27) \prec (3, 42)\).

The \emph{fork branch}, or simply \emph{fork}, between two paths is the first branch where they diverge. If one path subsumes the other, the fork does not exist.

\begin{definition}[Path-Input Affinity]
Given a bounds checking problem \((\mathit{Prog}, \mathit{Br})\), let \(\mathit{Paths}\) denote the set of partial paths ending at \(\mathit{Br}\). The \emph{path-input affinity} quantifies the closeness between a path \(\pi\) and an input \(x\) as:
\[
\mathit{Aff}: \mathit{Paths} \times \text{dom}(\mathit{Prog}) \to \mathbb{Z}_{\geq 0}^2,
\]
where \(\mathit{Aff}(\pi, x) = (u, v)\) is defined as follows: \(u\) is the number of branches in \(\pi\) starting from the fork branch; if the fork does not exist, \(u = 0\); \(v\) is the number of floating-point values between the left-hand side (lhs) and right-hand side (rhs) of the fork branch; if no fork exists, \(v = 0\).

The precise computation of \(v\) is given by:
\[
v =
\begin{cases} 
    \kappa(\text{lhs}, \text{rhs}) & \text{if the fork branch is } \text{lhs} \leq \text{rhs}, \text{lhs} \geq \text{rhs}, \text{or } \text{lhs} = \text{rhs}, \\
    \kappa(\text{lhs}, \text{rhs}) + 1 & \text{if the fork branch is } \text{lhs} < \text{rhs}, \text{lhs} > \text{rhs}, \\
    1 & \text{if the fork branch is } \text{lhs} \neq \text{rhs}.
\end{cases}
\]
\end{definition}\label{def:path_affinity}

\subsubsection{Example}

Consider the following program:
\begin{lstlisting}[language=C]
void foo(double x) {
    if (x <= 3.0) error();
}
\end{lstlisting}
Let the target path be \(\pi = [0T]\), where \(0\) denotes the label for the \texttt{if} condition. Then:
\[
\mathit{Aff}(\pi, 3.2) = (1, \kappa(3.2, 3)) \quad \text{and} \quad \mathit{Aff}(\pi, 2.9) = (0, 0).
\]

\subsubsection{Formalizing MCC with Path-Input Affinity}

With the notion of path-input affinity established, we can now express the MCC condition formally. For any \(\pi \in \mathit{Paths}\) and any two inputs \(x, x' \in \mathit{dom}(\mathit{Prog})\),
\[
\mathit{Aff}(\pi, x') \prec \mathit{Aff}(\pi, x) \quad \Rightarrow \quad \AWD(\pi, x') < \AWD(\pi, x).
\]
This guarantees that as an input \(x\) becomes \emph{closer} to triggering \(\pi\) (in the sense of affinity), the corresponding \(\AWD\) value strictly decreases, ensuring convergence in the optimization process.

\subsection{AWD Function and its Mathematical Properties}

We define the \(\AWD\) function as any implementation satisfying the interface \(\mathit{Paths} \times \mathit{dom}(\mathit{Prog}) \to \mathbb{F}\) that adheres to the following three properties: non-negativity, zero-target correspondence, and the Monotonic Convergence Condition (MCC). The MCC ensures that for a fixed \(\pi\), if \(\mathit{Aff}(\pi, x)\) decreases, then \(\AWD(\pi, x)\) must also decrease. Since \(\AWD\) returns a scalar value, ensuring compliance with MCC requires an encoding that preserves the lexicographic order \(\prec\).

To achieve this, we construct \(\AWD\) as follows:
\begin{equation}
    \AWD(\pi, x) = u \cdot M + v, \quad \text{where } (u, v) = \mathit{Aff}(\pi, x).
    \label{eq:AWD}
\end{equation}
Here, \(M\) is chosen such that \(M > \max(v)\) for all possible values of \(v\), ensuring correct order preservation. Fig.~\ref{fig:cut} illustrates the fork branch concept and the definition of \(\AWD\).

\begin{wrapfigure}{r}{0.4\textwidth} % 'r' for right alignment, '0.4\textwidth' for width
    \centering
    \includegraphics[width=\linewidth]{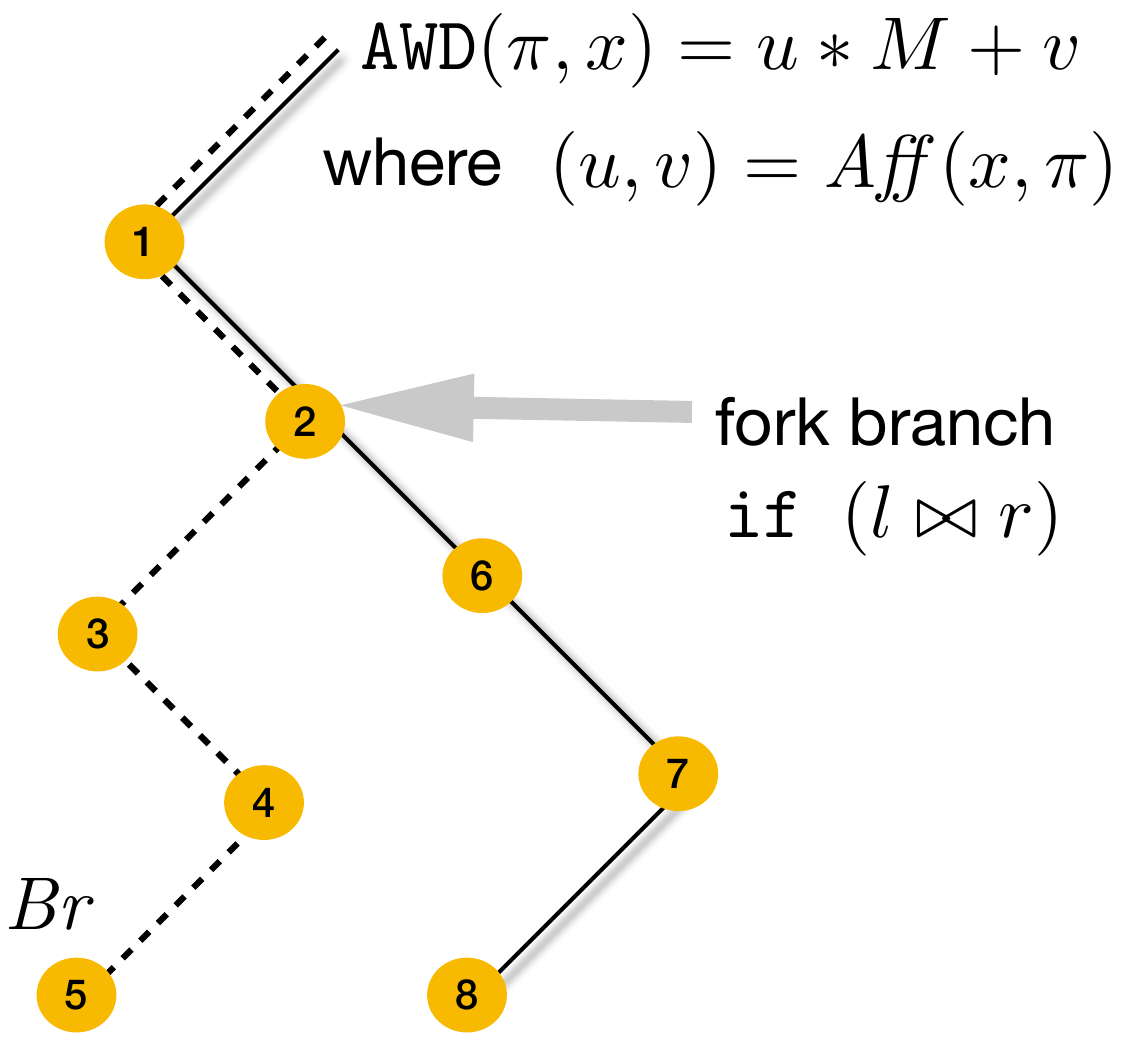}
    \caption{The fork branch and the $\AWD$ function.}
    \label{fig:cut}
\end{wrapfigure}

% \begin{figure}
%     \centering
%     \includegraphics[width=0.4\linewidth]{figures/cut.pdf}
%     \caption{Illustration of the fork branch and the $\AWD$ function.}
%     \label{fig:cut}
% \end{figure}

The function defined in Eq.~\eqref{eq:AWD} satisfies non-negativity and zero-target correspondence since \(u\), \(v\), and \(M\) are non-negative by construction. The crucial property that establishes MCC follows from the lemma below, which ensures that the encoding \(E(u,v)\) correctly preserves the lexicographic order:

\begin{lemma}
Let \((u_1, v_1)\) and \((u_2, v_2)\) be two pairs ordered lexicographically. Define the mapping \(E(u, v) = uM + v\), where \(M > \max(v)\) for all possible \(v\). Then \(E(u, v)\) preserves the lexicographic order:
\begin{enumerate}
    \item If \(u_1 < u_2\), then \(E(u_1, v_1) < E(u_2, v_2)\).
    \item If \(u_1 = u_2\) and \(v_1 < v_2\), then \(E(u_1, v_1) < E(u_2, v_2)\).
\end{enumerate}
\label{lem:lexical}
\end{lemma}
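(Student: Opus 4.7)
The plan is to handle the two cases of the lemma separately, using the key inequality $M > v$ that holds for every admissible $v$ to absorb any potential deficit in the low-order coordinate.

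For case (2), the argument is immediate: when $u_1 = u_2$, the difference simplifies to $E(u_2, v_2) - E(u_1, v_1) = v_2 - v_1$, which is strictly positive by hypothesis. I would dispatch this case first in a single sentence.

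The substance of the proof lies in case (1). The plan is to write
\[
E(u_2, v_2) - E(u_1, v_1) = (u_2 - u_1)\, M + (v_2 - v_1),
\]
and then show this quantity is strictly positive. Since $u_1$ and $u_2$ are non-negative integers with $u_1 < u_2$, I have $u_2 - u_1 \geq 1$, so the first summand is at least $M$. For the second summand, the hypothesis $M > \max(v)$ (together with $v_1, v_2 \geq 0$, which comes from $\mathit{Aff}$ taking values in $\mathbb{Z}_{\geq 0}^2$) gives $v_1 - v_2 \leq v_1 < M$, i.e.\ $v_2 - v_1 > -M$. Combining, $(u_2-u_1)M + (v_2-v_1) > M - M = 0$, which is exactly what is needed.

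I do not expect any real obstacle here: the lemma is essentially the standard fact that multiplying by a sufficiently large base encodes a lexicographic pair into a single scalar. The only subtlety worth flagging explicitly in the write-up is the integrality of $u_1$ and $u_2$, which lets $u_1 < u_2$ be strengthened to $u_2 - u_1 \geq 1$ so that the $M$-gap can dominate any possible negative contribution from $v_2 - v_1$. If $u$ were real-valued, the statement would fail, so it is worth pointing out that the domain $\mathbb{Z}_{\geq 0}$ of affinity's first coordinate is used in an essential way.
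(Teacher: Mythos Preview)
Your proposal is correct and follows essentially the same approach as the paper's proof sketch: both dispatch the two cases separately and rely on $M > \max(v)$ so that the high-order term dominates any deficit in the low-order coordinate. Your version is in fact more careful than the paper's, since you make explicit the use of integrality ($u_2 - u_1 \geq 1$) where the paper's sketch simply asserts that adding non-negative $v_1, v_2$ ``does not reverse the inequality.''
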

\begin{proof}[Proof Sketch]
We show that the mapping \( E(u, v) = uM + v \), where \( M > \max(v) \), preserves lexicographic order. If \( u_1 < u_2 \), then \( u_1M < u_2M \), and since \( v_1, v_2 \geq 0 \), adding them does not reverse the inequality, ensuring \( E(u_1, v_1) < E(u_2, v_2) \). If \( u_1 = u_2 \) and \( v_1 < v_2 \), then \( E(u_1, v_1) = u_1M + v_1 \) and \( E(u_2, v_2) = u_2M + v_2 \), reducing to \( v_1 < v_2 \), so \( E(u_1, v_1) < E(u_2, v_2) \). The condition \( M > \max(v) \) is necessary; otherwise, a large enough \( v_1 \) could dominate the difference \( (u_2 - u_1)M \), violating order preservation.
\end{proof}

\subsection{Algorithmic Construction of the AWD Function}

To construct the \(\AWD\) function, we need to compute  Eq.~\ref{eq:AWD}. The process of constructing \(\AWD\) follows these steps: (1) We introduce three global variables: \texttt{paths}, \texttt{beats}, and \texttt{d}. The variable \texttt{paths} serves as one of the two inputs to \(\AWD\), preserving the interface \(\mathit{Paths} \times \mathit{dom}(\mathit{Prog}) \to \mathbb{F}\) without altering the original interface of \(\mathit{Prog}\). Similarly, \texttt{d} represents the output of \(\AWD\), ensuring the original output type of \(\mathit{Prog}\) remains intact. The variable \texttt{beats} tracks the execution progress along the expected path, which we use for identifying the fork branch and computing \(u\) in Eq.~\ref{eq:AWD}. (2) Given an input \(x \in \mathit{dom}(\mathit{Prog})\), we require that \(\AWD(\pi, x)\) returns zero if \(x\) follows the expected path \(\pi\); otherwise, it identifies the fork branch, computes \((u, v) = \mathit{Aff}(\pi, x)\), and updates \texttt{d} following Eq.~\ref{eq:AWD}. 

To implement this, we embed a \emph{branch sentinel} within the program under analysis to identify fork branches and compute \((u,v)\) in the path affinity function. The branch sentinel is a callback function that dynamically updates \texttt{d} based on three execution stages: (1) if the realized path matches \(\pi\) completely, then \(\texttt{d} = 0\); (2) if it partially matches, \texttt{d} accumulates a distance proportional to the unmatched branches; (3) if it diverges at a fork, \texttt{d} incorporates the operand distance \(v\), scaled by the remaining unmatched depth. Algorithm~\ref{algo:sen} illustrates the design of the branch sentinel. Embedding the sentinel as a callback function before each branch ensures that the constructed \(\AWD\) function correctly computes Eq.~\ref{eq:AWD} and satisfies the required conditions of non-negativity, zero-target correspondence, and monotonic convergence.

As an implementation detail, we use \( \texttt{v} = \ln (1 + \texttt{v\_original})) \) instead of \texttt{v\_original} itself as one might expect from Def.~\ref{def:path_affinity}. This modification maintains the correctness of the algorithm due to Lem.~\ref{lem:lexical} and ensures that  value of $\texttt{M}$ that is larger than any \texttt{v}, can be found.

\begin{algorithm}[h]

\scriptsize

\caption{Embedded Branch Sentinel: \texttt{sen}}
\KwIn{\texttt{cond}: Branch condition \newline
      \texttt{op}: Comparison operator \newline
      \texttt{lhs}, \texttt{rhs}: Operands for comparison}
\KwOut{\texttt{d}}
\tcp*[l]{\texttt{paths}, \texttt{beats}, \texttt{d} are global variables. \texttt{M} is a large number.}
\If{\texttt{disable\_sen} is true}{
    \Return
}
\texttt{expected\_br} $\gets$ \texttt{paths[beats]} \;
\texttt{realized\_br} $\gets$ \texttt{Compare(op, lhs, rhs)} \;

\texttt{beats} $\gets$ \texttt{beats + 1} \;

\If{\texttt{expected\_br = realized\_br} \textbf{and} \texttt{beats = paths.length}}{
    \texttt{d} $\gets 0$ \tcp*[l]{Entire path matched}
    \texttt{disable\_sen} $\gets$ true \;
}
\ElseIf{\texttt{expected\_br = realized\_br}}{
    \texttt{d} $\gets -42$ \tcp*[l]{Partial progress, placeholder value to be overwritten}
}
\Else{ \tcp*[l]{Fork branch}
    $\texttt{u} \gets \texttt{paths.length} - \texttt{beats}$  \;
    $\texttt{v\_original} \gets \kappa (\texttt{lhs}, \texttt{rhs}, \kappa (\texttt{lhs}, \texttt{rhs}+1, \text{ or 1 depending on } \texttt{op}$  (Def. \ref{def:path_affinity}) \;
    $\texttt{v} \gets \ln (1 + \texttt{v\_original})$ \;
    $\texttt{d} \gets \texttt{u * M + v}$  \;
    \texttt{disable\_sen} $\gets$ true \;
}
\end{algorithm}\label{algo:sen}
\normalsize
   % [Describe approach, methods, and technical details.]

% \section{Algorithms and Limitations}   
% \input{04_implem}

% % Results Section

\section{Experiments}
\section{Implementation}

We implement AWD as a research prototype in Python and C++. The input is a tuple \((M, F, T)\), where \(M\) is an LLVM IR module, \(F\) is the entry function, and \(T\) is the target identifier (e.g., \texttt{\_\_error}). The output is a reachability verdict, optionally providing a triggering input.  

Our implementation consists of three main components. (1) The LLVM pass \texttt{paths.cc} synthesizes execution paths to \(T\) using breadth-first search over the control flow graph, tracking partial paths under depth and call stack constraints. Handling loops and external calls as branches remains a future extension. 
(2) The \texttt{embed.cc} pass instruments conditional branches by inserting type-aware function calls via \texttt{llvm::IRBuilder}, ensuring program semantics remain intact while enabling precise tracking. (3) Global optimization, implemented in \texttt{min.py}, minimizes \(\AWD(\pi, x)\) for each synthesized path, seeking inputs that reach zero. We use \texttt{basinhopping}, a stochastic optimizer leveraging random perturbations and MCMC acceptance criteria, with \texttt{Powell} for local refinement. The AWD function is dynamically loaded as \texttt{euler.so} and invoked via Python’s \texttt{ctypes}, avoiding inter-process overhead. To ensure correctness, we enforce nonnegativity constraints and classify reachability based on the computed minima.

\subsection{Experimental Setup}

We evaluated AWD on all 40 benchmarks from SV-COMP 2024's floats-cdfpl category~\cite{svcomp2024,svbenchmarks}, designed for bounds checking~\cite{DBLP:conf/tacas/DSilvaHKT12}. Each C benchmark has a known ground truth (G.T in Tab.~\ref{tab:results}) as either reachable (REA) or unreachable (UNR). 
% The benchmarks fall into three categories, as shown in the table, column 1:  
% newton\_*,  
% sine\_*, and  
% square\_*.  

We compared AWD against two verification tools: static analyzer Astrée~\cite{DBLP:conf/asian/CousotCFMMMR06} and C bounded model checker CBMC~\cite{DBLP:journals/corr/abs-2302-02384} which combines bounded model checking and symbolic execution. Astrée is a proprietary static analysis tool. Since it is not freely available, we did not run it but used recorded results from previous executions on the same set of benchmarks, as documented in~\cite{cprover}. CBMC is an actively developed  tool widely recognized for software verification, particularly for its capability in floating-point verification.

Other potential comparison methods were considered but ultimately excluded. (1) Weak Distance Methods~\cite{DBLP:conf/pldi/FuS19}: While initially considered, weak-distance-based approaches failed to handle even the simplest branching structures, as shown in Sect.~\ref{sect:eg}. (2) Conflict-Driven Learning (CDL)~\cite{DBLP:conf/tacas/DSilvaHKT12}: Originally developed over a decade ago, CDL does not appear to be actively maintained. Additionally, its original authors are now among the primary developers of CBMC. Given CBMC’s status as a leading floating-point verification tool, we consider it the more relevant and representative comparison.

All AWD and CBMC experiments were conducted on a MacBook Air equipped with an Apple M3 chip and 24GB of memory, running macOS Sequoia 15.1. As mentioned above, recorded results from~\cite{cprover} were used for Astrée.

\subsection{Accuracy Comparison}

As shown in Table~\ref{tab:results}, AWD achieved 100\% accuracy, correctly classifying all benchmarks as either reachable (REA) or unreachable (UNR). In contrast, Astrée solved only 17.5\% of the cases, demonstrating its limitations in handling floating-point constraints, while CBMC also achieved 100\% accuracy. These results confirm that AWD provides the same level of precision as model checking while significantly outperforming static analysis.

A sanity check was performed, as shown in columns 3 and 4 of the table. Whenever the AWD function reached a minimum value of \(0\), AWD consistently returned REA, correctly matching the ground truth classification of REA (reachable). This outcome aligns with theoretical expectations and further reinforces the reliability of AWD’s optimization-based approach.

The comparison against Astrée may appear problematic. As a static analysis tool based on over-approximation through abstract interpretation~\cite{DBLP:conf/popl/CousotC79}, Astrée's primary strength lies in proving that erroneous  behaviors will never occur in any execution of the analyzed program. In other words, Astrée is designed to establish unreachability (absence of bugs) rather than verify reachability (presence of bugs), although it can be used for the latter with a tight abstraction~\cite{DBLP:conf/pldi/0002CS21}.  This limitation is evident in the results, as Astrée failed in all cases requiring reachability verification. However, if we consider only the 23 benchmarks where the ground truth is unreachability (UNR), a more appropriate evaluation of Astrée's accuracy emerges. In this case, Astrée correctly identified 7 out of 23, resulting in an adjusted accuracy of 30.4\%, which may be insufficient for practical bounds checking compared to AWD.

 % \begin{tabular}{llllllllll}

% \toprule
%  \multicolumn{2}{c}{SV-Competition Benchmarks}  & \phantom{a} & \multicolumn{2}{c}{Our Results}  & \phantom{a} & \multicolumn{4}{c}{Compared with Ground Truths} \\
% \cmidrule{1-2} \cmidrule{4-5} \cmidrule{7-10} 
% Name & Ground Truth &  & Minimum  & Verict &  & Ours & Astr\'ee & CBMC & CDL \\
% \midrule

\begin{table}\centering
\ra{1.2}
\setlength{\tabcolsep}{4pt}

\scriptsize
\caption{Benchmark Results: Accuracy and Execution Time of AWD, CBMC, and Astrée for Floating-Point Bounds Checking. 
G. T. refers to the ground truth classification, where UNR indicates an unreachable state and REA indicates a reachable state.}
\label{tab:results}\label{tab:results}

\begin{tabular}{ll|ll|lll|lll} 
\toprule 
\multicolumn{2}{c}{ Benchmark}   & \multicolumn{2}{c}{AWD Results} &   \multicolumn{3}{c}{Accuracy} & \multicolumn{3}{c}{Time (seconds)} \\ 
\cmidrule{1-2} \cmidrule{3-4} \cmidrule{5-7} \cmidrule{8-10} 
Name & G. T. &  Minimum  & Verict  & ASTREE & CBMC & AWD & ASTREE & CBMC & AWD  \\
\midrule 
newton\_1\_1 & UNR & 1.97E-01 & UNR & \greentick & \greentick & \greentick & 0.05 & 8.08 & 0.55 \\
newton\_1\_2 & UNR & 1.77E-01 & UNR & \greentick & \greentick & \greentick & 0.05 & 38.25 & 0.53 \\
newton\_1\_3 & UNR & 1.16E-01 & UNR & \reddash & \greentick & \greentick & 0.05 & 62.73 & 0.53 \\
newton\_1\_4 & REA & 0 & REA & \reddash & \greentick & \greentick & 0.05 & 2.75 & 0.52 \\
newton\_1\_5 & REA & 0 & REA & \reddash & \greentick & \greentick & 0.05 & 1.28 & 0.52 \\
newton\_1\_6 & REA & 0 & REA & \reddash & \greentick & \greentick & 0.05 & 3.97 & 0.54 \\
newton\_1\_7 & REA & 0 & REA & \reddash & \greentick & \greentick & 0.04 & 3.62 & 0.53 \\
newton\_1\_8 & REA & 0 & REA & \reddash & \greentick & \greentick & 0.05 & 3.40 & 0.55 \\
newton\_2\_1 & UNR & 2.00E-01 & UNR & \greentick & \greentick & \greentick & 0.06 & 132.97 & 0.52 \\
newton\_2\_2 & UNR & 2.00E-01 & UNR & \greentick & \greentick & \greentick & 0.06 & 80.84 & 0.55 \\
newton\_2\_3 & UNR & 2.00E-01 & UNR & \reddash & \greentick & \greentick & 0.06 & 78.30 & 0.54 \\
newton\_2\_4 & UNR & 1.96E-01 & UNR & \reddash & \greentick & \greentick & 0.06 & 77.34 & 0.54 \\
newton\_2\_5 & UNR & 1.37E-01 & UNR & \reddash & \greentick & \greentick & 0.06 & 132.82 & 0.56 \\
newton\_2\_6 & REA & 0 & REA & \reddash & \greentick & \greentick & 0.06 & 61.42 & 0.54 \\
newton\_2\_7 & REA & 0 & REA & \reddash & \greentick & \greentick & 0.05 & 205.25 & 0.54 \\
newton\_2\_8 & REA & 0 & REA & \reddash & \greentick & \greentick & 0.04 & 11.47 & 0.57 \\
newton\_3\_1 & UNR & 2.00E-01 & UNR & \greentick & \greentick & \greentick & 0.06 & 182.19 & 0.53 \\
newton\_3\_2 & UNR & 2.00E-01 & UNR & \greentick & \greentick & \greentick & 0.06 & 226.46 & 0.56 \\
newton\_3\_3 & UNR & 2.00E-01 & UNR & \greentick & \greentick & \greentick & 0.06 & 842.97 & 0.56 \\
newton\_3\_4 & UNR & 2.00E-01 & UNR & \reddash & \greentick & \greentick & 0.06 & 223.78 & 0.55 \\
newton\_3\_5 & UNR & 2.00E-01 & UNR & \reddash & \greentick & \greentick & 0.06 & 228.78 & 0.55 \\
newton\_3\_6 & REA & 0 & REA & \reddash & \greentick & \greentick & 0.06 & 105.55 & 0.56 \\
newton\_3\_7 & REA & 0 & REA & \reddash & \greentick & \greentick & 0.05 & 67.35 & 0.54 \\
newton\_3\_8 & REA & 0 & REA & \reddash & \greentick & \greentick & 0.06 & 55.55 & 0.56 \\
sine\_1 & REA & 0 & REA & \reddash & \greentick & \greentick & 0.02 & 0.40 & 0.56 \\
sine\_2 & REA & 0 & REA & \reddash & \greentick & \greentick & 0.02 & 1.38 & 0.53 \\
sine\_3 & REA & 0 & REA & \reddash & \greentick & \greentick & 0.02 & 1.16 & 0.59 \\
sine\_4 & UNR & 1.01E-01 & UNR & \reddash & \greentick & \greentick & 0.02 & 663.80 & 0.54 \\
sine\_5 & UNR & 1.91E-01 & UNR & \reddash & \greentick & \greentick & 0.02 & 15.22 & 0.53 \\
sine\_6 & UNR & 2.91E-01 & UNR & \reddash & \greentick & \greentick & 0.02 & 15.76 & 0.56 \\
sine\_7 & UNR & 5.91E-01 & UNR & \reddash & \greentick & \greentick & 0.02 & 7.22 & 0.55 \\
sine\_8 & UNR & 1.09E+00 & UNR & \reddash & \greentick & \greentick & 0.02 & 16.96 & 0.54 \\
square\_1 & REA & 0 & REA & \reddash & \greentick & \greentick & 0.01 & 1.11 & 0.56 \\
square\_2 & REA & 0 & REA & \reddash & \greentick & \greentick & 0.02 & 0.86 & 0.56 \\
square\_3 & REA & 0 & REA & \reddash & \greentick & \greentick & 0.02 & 0.70 & 0.55 \\
square\_4 & UNR & 1.00E-01 & UNR & \reddash & \greentick & \greentick & 0.02 & 61.41 & 0.55 \\
square\_5 & UNR & 1.00E-01 & UNR & \reddash & \greentick & \greentick & 0.02 & 52.32 & 0.55 \\
square\_6 & UNR & 1.01E-01 & UNR & \reddash & \greentick & \greentick & 0.02 & 52.81 & 0.55 \\
square\_7 & UNR & 1.02E-01 & UNR & \reddash & \greentick & \greentick & 0.02 & 36.01 & 0.57 \\
square\_8 & UNR & 2.02E-01 & UNR & \reddash & \greentick & \greentick & 0.02 & 0.42 & 0.56 \\
\midrule
SUMMARY &               &          &             & 17.50\%  & 100\%      & 100\%      & 0.04 & 94.12 & 0.55 \\
 
\bottomrule 
\end{tabular}

\end{table}
 
\subsection{Execution Time Comparison}

AWD demonstrates a substantial performance advantage over CBMC, significantly reducing execution time across all benchmarks. On average, AWD completed each benchmark in 0.55 seconds, whereas CBMC required 94.12 seconds per benchmark. This corresponds to a 170X speedup, highlighting AWD’s efficiency in solving bounds-checking problems.

%Astrée was the fastest at 0.04 seconds but lacked accuracy, making it unreliable. AWD balances speed and accuracy, offering a practical bounds-checking solution.
Astrée was fastest at 0.04s but lacked accuracy, making it unreliable. AWD balances speed and accuracy, providing a practical bounds-checking solution.
% CBMC exhibits particularly slow execution times on certain Newton benchmarks. For instance, newton\_2\_7 required 205.25 seconds, whereas AWD solved it in just 0.54 seconds. The most extreme case, newton\_3\_3, took 842.97 seconds for CBMC, while AWD completed it in 0.56 seconds.
% By contrast, AWD maintains a nearly constant execution time of approximately 0.55 seconds across all benchmarks, regardless of complexity. This consistency is likely due to AWD’s execution-based approach, where performance is primarily dictated by program runtime rather than reasoning overhead. Given that all bencharks are relatively small in size but vary in complexity, AWD efficiently handles them without drastic slowdowns observed in CBMC.

CBMC is particularly slow on certain Newton benchmarks; it took 205.25 seconds on \texttt{newton\_2\_7}, while AWD solved it in 0.54 seconds. The most extreme case, \texttt{newton\_3\_3}, required 842.97 seconds for CBMC but only 0.56 seconds for AWD. In contrast, AWD maintains a nearly constant execution time of approximately 0.55 seconds across all benchmarks. This stability likely stems from its execution-based approach, where runtime depends on program execution rather than reasoning overhead. Since all benchmarks are small but vary in complexity, AWD avoids the drastic slowdowns seen in CBMC.

\section{Discussion}

\subsubsection*{Theoretical Boundaries and Guarantees}

The augmented weak-distance (AWD) approach extends weak distance but also inherits its theoretical limitations. In general, global optimization may fail to return a true global minimum. To formalize this limitation, we use the notation \( \hat{x}^* \) to denote the global minimum point found by a mathematical optimization backend for a given path \( \pi \), and \( x^* \) to denote the true global minimum. Then, we have:

\[
\AWD(\pi, \hat{x}^*) \geq \AWD(\pi, x^*).
\]

Consider two cases. If \( \AWD(\pi, \hat{x}^*) = \AWD(\pi, x^*) \), meaning the optimization backend finds the correct minimum, then the zero-target correspondence property ensures that AWD produces the correct verdict, whether the path is reachable or unreachable. However, if \( \AWD(\pi, \hat{x}^*) > \AWD(\pi, x^*) \) and additionally \( \AWD(\pi, \allowbreak x^*) = 0 \), then AWD will incorrectly report the path as unreachable, failing to detect a valid bound. Depending on how reachability is framed, this issue may be classified as incompleteness or unsoundness.

\subsubsection*{Loops and External Calls: Design Trade-offs and Challenges}

While AWD mitigates path explosion by ignoring branches within loops and treating external functions as opaque constructs, this simplification risks overlooking target branches that reside within a loop or an external function. For instance, in the \texttt{check\_sum} example (Sect.~\ref{sect:intro}), if the condition \texttt{sum + decimal == 11} were placed inside the loop spanning lines 6 to 97, our current solution would fail to trigger it. This limitation becomes particularly problematic when critical conditions are nested within loops or hidden inside external functions, as these branches are effectively excluded from the path exploration process.

\subsubsection*{Beyond Floating-Point}
Inputs that are not floating-point numbers must be transformed into floating-point representations to leverage the optimization process. This transformation is straightforward for primary data types such as integers or unsigned numbers. However, handling complex types, such as pointers or custom data structures, requires specialized mappings that preserve semantic meaning. Automating this process is non-trivial and inherently limited.

When dealing with conditional branches based on non-floating-point properties, such as \texttt{if Prop(a)}, where \texttt{Prop(a)} evaluates to \texttt{true} or \texttt{false}, the current implementation can only assign a binary distance, which provides no gradient information to guide the optimization process. Otherwise, if \texttt{Prop(a)} is a simple function, it can  be inlined or incorporated directly into the path synthesis process during the path synthesis. For example, this strategy works well for handling branches in the \texttt{safe\_reciprocal} function (Sect.~\ref{sect:eg}). However, when \texttt{Property(a)} involves inaccessible library functions or contains a large number of branches, this approach becomes infeasible.

% % Related Work Section
 \section{Related Work}

%Weak Distance Framework

This section reviews related work and positions our contributions. Our approach builds upon the weak distance framework~\cite{DBLP:conf/pldi/FuS19}, which transforms program analysis problems into mathematical optimization tasks by minimizing an objective function. This paradigm has been extensively explored in the programming languages domain, with applications such as stochastic validation of compilers~\cite{DBLP:conf/asplos/Schkufza0A13}, search-based testing~\cite{DBLP:journals/tse/HarmanM10}, floating-point satisfiability solving~\cite{DBLP:conf/cav/FuS16}, and coverage-based testing~\cite{DBLP:conf/pldi/FuS17}. A fundamental distinction between weak distance and traditional heuristic-based search methods lies in theoretical guarantees. While heuristic approaches (e.g., search-based testing) aim to minimize a cost function, they often lack assurances that the computed minima correspond to the intended program behavior~\cite{DBLP:journals/tosem/PereraTAB24}. In contrast, weak distance guarantees that the global minimum corresponds to a valid input triggering the target program behavior. Our work extends this guarantee by introducing a theoretical condition that ensures efficient convergence, thereby improving robustness and usability.

%Bounds Checking and Numerical Verification
Bounds checking, also known as numerical bound analysis~\cite{DBLP:conf/tacas/DSilvaHKT12}, represents an ideal application domain for AWD due to its emphasis on numerical computations. Unlike symbolic execution~\cite{DBLP:journals/fac/BoerB21}, which encodes floating-point operations using bit-vector logic such as QF\_BVFP~\cite{DBLP:conf/tacas/BrainSS19} and relies on NP-complete decision procedures~\cite{DBLP:series/txtcs/KroeningS16}, AWD turns the challenge of floating-point analysis into an opportunity of leveraging numerical techniques for improved efficiency. While symbolic execution is effective for verifying complex programs involving heap structures~\cite{DBLP:conf/icse/RajputG22}, its generality often leads to scalability challenges, as demonstrated in recent empirical studies~\cite{DBLP:journals/jss/YangZSCW25}. By focusing  on numerical computations, AWD achieves significantly improved performance in  bounds checking.

%Advancements in Symbolic Execution and Model Checking

% Recent efforts have sought to address the scalability limitations of symbolic execution and model checking through various strategies. Compositional reasoning~\cite{DBLP:journals/corr/abs-2407-10838} decomposes verification tasks into smaller, manageable components, enabling more efficient analysis. Per-function symbolic execution [ref] focuses on analyzing individual functions in isolation, reducing the complexity of symbolic execution. Additionally, the use of large language models (LLMs) has been explored for generating invariants and guiding program path exploration [ref], improving the efficiency of symbolic execution. While these techniques enhance scalability, they do not directly target the challenges of floating-point bounds checking, where mathematical optimization provides a more natural and computationally efficient solution.

%Comparison with Established Tools

Traditional tools for verifying numerical properties include static analyzers such as Astrée~\cite{DBLP:conf/asian/CousotCFMMMR06}, model checkers like CBMC~\cite{DBLP:journals/corr/abs-2302-02384}, symbolic execution engines such as KLEE~\cite{DBLP:conf/osdi/CadarDE08}, and SAT/SMT-based approaches like Conflict-Driven Learning (CDL)~\cite{DBLP:conf/tacas/DSilvaHKT12}. Each of these techniques has distinct strengths and limitations. CDL appears to be no longer actively maintained, and weak-distance-based approaches, despite their theoretical appeal, struggle with even simple branching structures, making them impractical for bounds checking. CBMC, as a bounded model checker with a symbolic execution backend, remains one of the most successful tools in floating-point verification and is widely used in verification competitions. Consequently, our evaluation focuses on Astrée and the latest available version of CBMC, as they represent state-of-the-art tools in static analysis, model checking, and symbolic execution for floating-point verification.

%Broader Perspective: Mathematical Optimization in Program Analysis

More broadly, Mathematical Optimization~\cite{DBLP:books/cu/BV2014,DBLP:conf/osdi/CadarDE08} is a fundamental tool in computer science. It serves as the backbone of modern machine learning, enabling the optimization of objective functions for specific domains, such as loss functions in deep learning~\cite{lecun2015deep} and policy functions in reinforcement learning~\cite{wiering2012reinforcement}. Similarly, weak distance and augmented weak distance focus on designing effective objective functions, but within the context of program analysis. Our work identifies fundamental issues in the weak distance objective function and proposes augmented weak distance to overcome these limitations, ensuring more reliable and efficient optimization in floating-point program verification.

 \label{sect:relework}
%     [Discuss related research and position your work in relation to existing studies.]

% % Conclusion Section
\section{Conclusion}

In this paper, we introduced an augmented weak-distance framework, extending the theoretical foundations of weak-distance optimization to address its key limitations in numerical bounds verification. By overcoming branching discontinuity through a per-path approach,  and overcoming insufficient theoretical conditions by introducing a new condition, Monotonic Convergence Condition (MCC), we ensure practical optimization landscapes that are both theoretically valid and computationally effective while maintaining the theoretical guarantees of the original weak-distance framework.

Our approach was rigorously validated on the SV-COMP 2024 benchmark suite that were used in bounds checking, achieving 100\% accuracy across 40 benchmarks with known ground truths. This significantly outperformed existing techniques, including static analysis, model checking, and conflict-driven learning, both in accuracy and execution time, demonstrating the practical utility of the augmented framework.

The theoretical contributions of our work—particularly the formalization of path-input affinity and the design of the Euler function—lay the groundwork for future extensions. Moving forward, we aim to generalize this framework to handle non-numerical program constructs, such as heap-manipulating programs, while also scaling it to larger, more complex benchmarks.

%     [Summarize your findings and possible future directions.]

% Bibliography
% \bibliographystyle{ACM-Reference-Format}
\bibliographystyle{splncs04}
\bibliography{references}

\end{document}